\providecommand{\@license}{} 
\definecolor{qpurple}{RGB}{112, 81, 145} 
\definecolor{qolive}{RGB}{110, 120, 85}
\definecolor{qbrick}{RGB}{150, 70, 65}
\theoremstyle{plain}
\newtheorem{theorem}{Theorem}[section]
\newtheorem{lemma}[theorem]{Lemma}
\theoremstyle{remark}
\theoremstyle{definition}
\definecolor{mathgreen}{RGB}{10,150,10} 
\lstdefinestyle{pseudocode}{
    language=Python,
    basicstyle=\ttfamily\small,
    mathescape=true,
    columns=fullflexible,
    keepspaces=true,
    upquote=true,
    literate=
      {*}{{$\bullet$}}1
      {•}{{$\bullet$}}1
      {$}{{{\color{mathgreen}$}}}1
      {^}{{\color{mathgreen}^}}1
      {_}{{\color{mathgreen}_}}1,
}
\title{Virtual Qudits for Simon’s Problem: Dimension Lifted Algorithms on Qubit Hardware}
\author[1,$\dagger$]{Abed Semre}
\author[2]{Steven Frankel}
\affil[1]{Computer Science Department, Technion – Israel Institute of Technology, Haifa, Israel}
\affil[2]{Faculty of Mechanical Engineering, Technion – Israel Institute of Technology, Haifa 3200003, Israel}
\institution{Technion}
\email{abed.semre@campus.technion.ac.il}
\begin{abstract}
We study Simon’s problem over the module $\mathbb{Z}_d^n$ for arbitrary $d \ge 2$ and show how to explore qudit style advantages using only qubit based hardware and qubit level oracles. 
Starting from a standard binary ($d=2$) instance with promise oracle $U_f:\mathbb{Z}_2^n \to \mathbb{Z}_2^n$, we construct a dimension lifted oracle $f_{(d)}:\mathbb{Z}_d^n \to \mathbb{Z}_d^n$ for $d = 2^\ell$ by a simple layer wise pack/unpack encoding of $\ell$ qubits into one virtual qudit, and illustrating how qudit style formulations can be studied and exploited on standard qubit devices without access to native multilevel hardware.
For a hidden shift $s \in \mathbb{Z}_d^n$ of full order $\operatorname{ord}(s)=d$, one query to $f_{(d)}$ placed between two layers of $\mathrm{QFT}_d^{\otimes n}$ layer yields measurement outcomes that are exactly uniform on
\[
S^\perp \coloneqq \{\, y \in \mathbb{Z}_d^n : y^\top s \equiv 0 \pmod d \,\},
\]
recovering the original qubit algorithm when $d=2$.
From this structure we derive non-asymptotic bounds on the probability of obtaining $n-1$ independent constraints and obtain explicit repetition budgets as functions of the effective local dimension $d$, showing that the expected number of oracle calls remains $\Theta(n)$ while the required repetitions decrease with $d$.
Using QuTiP, we simulate these dimension lifted instances (for $d \in \{2,3,4\}$), confirming uniform sampling on $S^\perp$.
\end{abstract}
\keywords{Simon's algorithm, qudits, hidden shift, QFT over $\mathbb{Z}_d$, qudit circuits, QuTiP simulation}
\begin{document}

\maketitle
\section{Background and Motivation}
\label{sec:background}

\subsection{From Qubits to Qudits}
\label{sec:qubits_to_qudits}

\rhostart{Q}uantum computation is most commonly described in terms of \emph{qubits}, i.e., two level systems with computational basis $\{\ket{0},\ket{1}\}$. 
At the hardware level, however, many platforms (e.g., trapped ions ~\cite{Bruzewicz_2019}, photonics ~\cite{Flamini_2018}) naturally exhibit more than two accessible energy levels per physical site. 
This motivates working with an explicitly multilevel \emph{qudit-based} based formalism, in which the local Hilbert space is $\mathcal{H}_d \cong \mathbb{C}^d$ with basis $\{\ket{0},\dots,\ket{d-1}\}$.

Qudits can increase information density~\cite{Wang_2020},
reduce the depth of certain circuits~\cite{Gokhale_2019},
and better exploit hardware native transitions~\cite{Bruzewicz_2019}.
These potential advantages have motivated qudit generalizations of
several canonical quantum primitives, including Fourier transforms
over $\mathbb{Z}_d$~\cite{NielsenChuang},
multivalued oracles~\cite{Muthukrishnan_2000},
and other interference based routines~\cite{Wang_2020} for dimensions $d>2$.

In this work, the term ``qudit'' is used in a slightly broader, algorithmic sense. 
We allow $d$ level systems to be either
(i) \emph{native} multilevel hardware, when available, or 
(ii) \emph{virtual} qudits obtained by encoding $\ell$ qubits into one effective $d = 2^\ell$ level system via a simple pack/unpack map.
All of our concrete constructions can be implemented using only qubit level operations and a standard binary Simon oracle; the qudit viewpoint is used to reveal how increasing the effective local dimension impacts the behavior of the algorithm.

\paragraph{Notation and assumptions.}
Unless stated otherwise we work over the ring $\mathbb{Z}_d$ with the standard inner product
\[
x \cdot y \;=\; \sum_i x_i y_i \pmod d.
\]
Vectors in $\mathbb{Z}_d^n$ label computational basis states $\ket{x} \in \mathcal{H}_d^{\otimes n}$, which may be realized either as native qudit registers or as encoded blocks of qubits.

\subsection{Why Generalize Simon's Algorithm?}
\label{sec:why_simon}

Simon’s algorithm~\cite{Simon_1994} is one of the earliest examples of an exponential separation between quantum and classical query complexity. 
Given a function $f:\{0,1\}^n \to \{0,1\}^n$ that is $2$-to-$1$ with promise $f(x) = f(x \oplus s)$ for some unknown $s \neq 0$, the quantum algorithm recovers $s$ using $O(n)$ oracle queries, whereas any classical randomized algorithm requires exponentially many queries in $n$ ~\cite{Simon_1994}.

While multilevel generalizations exist for other algorithms (such as Deutsch–Jozsa~\cite{DeutschJozsa_1992} and Grover search ~\cite{Grover_1997}), a careful qudit formulation of Simon’s algorithm is particularly attractive for two reasons:
(i) it makes explicit how higher dimensional interference structures control sampling uniformity over orthogonal subspaces and the number of repetitions needed to obtain $n-1$ independent constraints; and
(ii) it offers a clean setting in which to ask whether \emph{qudit style advantages} (e.g., fewer repetitions for a fixed failure probability) can be explored and quantified even when only qubit hardware and a binary oracle are available.

The present work uses Simon’s problem as a testbed: we first formulate the algorithm over $\mathbb{Z}_d^n$, and then show how such a $d$-ary instance can be \emph{implemented and simulated} on qubit based devices by encoding groups of qubits into effective qudits.

\subsection{Goals and Contributions of This Work}
\label{sec:goals}

At a high level, our goal is to highlight how increasing the (effective) local dimension $d$ affects the behavior of Simon’s algorithm, and to show that these qudit style effects can already be studied on qubit only platforms by a simple encoding construction. 
More concretely, this work provides:

\begin{itemize}
    \item A precise formulation of the $d$-to-one promise over $\mathbb{Z}_d^n$ with hidden shift $s \neq 0$ and the associated qudit version of Simon’s algorithm, including the characterization of the measurement outcomes as uniform samples from
    \[
    S^\perp \;=\; \{\, y \in \mathbb{Z}_d^n : y \cdot s \equiv 0 \pmod d \,\}.
    \]

    \item A \emph{qubit-native} construction of a $d$-to-one oracle $f_{(d)}:\mathbb{Z}_d^n \to \mathbb{Z}_d^n$ for $d = 2^\ell$ using only the original binary Simon oracle $U_f:\mathbb{Z}_2^n \to \mathbb{Z}_2^n$. 
    The construction is based on a layerwise pack/unpack encoding of $\ell$ qubits into one effective qudit, and preserves the hidden shift structure in a dimension lifted instance.

    \item A complexity analysis that bounds the probability of collecting $n-1$ independent samples from $S^\perp$ and derives explicit repetition counts as a function of the local dimension $d$. 
    The resulting bounds show that the expected number of oracle calls remains $\Theta(n)$, while the number of repetitions required to achieve a target failure probability decreases as $d$ grows.

    \item Numerical case studies and QuTiP simulations for representative dimensions (e.g., $d \in \{2,3,4\}$). 
    These simulations empirically confirm uniform sampling on $S^\perp$ and illustrate the predicted dimension repetition tradeoff.
\end{itemize}


\section{Quantum Gates and Operations for Qudit Based Algorithms}
\label{sec:gates_ops}

\subsection{States}
\label{sec:states}

A qudit lives in $\mathcal{H}_d = \mathrm{span}\{\,|i\rangle \mid i \in \mathbb{Z}_d\,\}$~\cite{NielsenChuang, Wang_2020} with orthonormal basis $\{|i\rangle\}_{i=0}^{d-1}$, $\langle i|j\rangle=\delta_{ij}$. Any pure state has the form
\begin{equation}
    |\psi\rangle = \sum_{i \in \mathbb{Z}_d} \alpha_i |i\rangle
    \quad \text{with} \quad
    \sum_{i \in \mathbb{Z}_d} |\alpha_i|^2 = 1,
    \label{eq:qudit_state}
\end{equation}
which we identify with a vector in $\mathbb{C}^d$.

\subsection{Gates}
\label{sec:gates}

\subsubsection{X gate}
\label{sec:xgate}

For qubits, $X$ flips $|0\rangle \leftrightarrow |1\rangle$. For qudits, the generalized shift acts by modular addition:
\begin{equation}
    X_d |i\rangle = |i \oplus 1\rangle,
    \qquad i \in \mathbb{Z}_d,
    \label{eq:xgate_def}
\end{equation}
with $\oplus$ modulo $d$. In matrix form, $X_d$ is the $d\times d$ cyclic permutation matrix:
\[
    X_d =
    \begin{pmatrix}
        0 & 0 & \dots & 0 & 1 \\
        1 & 0 & \dots & 0 & 0 \\
        0 & 1 & \dots & 0 & 0 \\
        \vdots & \vdots & \ddots & \vdots & \vdots \\
        0 & 0 & \dots & 1 & 0
    \end{pmatrix}.
\]
For $d=2$, \[
X_2=\begin{pmatrix}0 & 1\\ 1 & 0\end{pmatrix}.
\]

Such multi-valued increment operations are standard primitives in
qudit logic and appear in the construction of multi-valued
controlled gates and oracles~\cite{Muthukrishnan_2000}.

\subsubsection{H gate}
\label{sec:hgate}

The qudit analogue of the Hadamard is the QFT over $\mathbb{Z}_d$:
\begin{equation}
    \mathrm{QFT}_d |j\rangle = \frac{1}{\sqrt{d}} \sum_{k=0}^{d-1} \omega^{jk} |k\rangle,
    \quad \omega=e^{2\pi i/d},
    \label{eq:hgate_def}
\end{equation}
with matrix
\[
    \mathrm{QFT}_d = \frac{1}{\sqrt{d}}
    \begin{pmatrix}
        1 & 1 & 1 & \dots & 1 \\
        1 & \omega & \omega^{2} & \dots & \omega^{d-1} \\
        1 & \omega^{2} & \omega^{4} & \dots & \omega^{2(d-1)} \\
        \vdots & \vdots & \vdots & \ddots & \vdots \\
        1 & \omega^{d-1} & \omega^{2(d-1)} & \dots & \omega^{(d-1)^2}
    \end{pmatrix}.
\]
For $d=2$, \[
\mathrm{QFT}_2 = H=\frac{1}{\sqrt{2}}\begin{pmatrix}1&1\\ 1&-1\end{pmatrix}.
\]

\section{Generalized Simon’s Problem over Qudits}
\label{sec:simon_qudit}

Let \(f:\mathbb{Z}_d^n\to\mathbb{Z}_d^n\) be \(d\)-to-one with hidden shift \(s\neq 0\) such that
\[
    f(x)=f(y) \iff y=x\oplus k\,s \quad \text{for some } k\in\{0,\dots,d-1\},
\]
where \(\oplus\) denotes addition \((\bmod\, d)\).

\subsection{Complexity}
\label{sec:complexity_qudit}

\subsubsection{Classical complexity}
\label{sec:classical_qudit}
Any classical randomized algorithm requires exponentially many queries in \(n\). In particular, observing a collision \(f(x)=f(y)\) (which reveals a multiple of \(s\)) needs \(\Omega(d^{n/2})\) queries by a birthday paradox argument, and fully identifying \(s\) remains exponential in \(n\).

\subsubsection{Coset decomposition}
\label{sec:quantum_structure_qudit}
Partition \(\mathbb{Z}_d^n\) into \(d^{\,n-1}\) cosets (orbits) of \(\langle s\rangle\): choose representatives \(S_1=\{z_i^{(1)}\}\) and set \(z_i^{(j)}=z_i^{(1)}\oplus (j-1)s\), so that all \(z_i^{(j)}\) collide under \(f\).

\subsection{Quantum algorithm}
\label{sec:algo_qudit}

Given the oracle \(U_f:|x\rangle|0\rangle\mapsto|x\rangle|f(x)\rangle\), the algorithm is:
\begin{enumerate}
    \item \textbf{Create superposition.} Start with \(|\psi_0\rangle=|0\rangle^{\otimes n}|0\rangle^{\otimes n}\) and apply \(\mathrm{QFT}_d^{\otimes n}\) to the first register:
    \[
        |\psi_1\rangle=\frac{1}{d^{n/2}}\sum_{x\in\mathbb{Z}_d^n}|x\rangle|0\rangle^{\otimes n}.
    \]
    \item \textbf{Apply the oracle:}
    \[
        |\psi_2\rangle=\frac{1}{d^{n/2}}\sum_{x\in\mathbb{Z}_d^n}|x\rangle|f(x)\rangle.
    \]
    \item \textbf{Apply \(\mathrm{QFT}_d^{\otimes n}\) again:}
    \[
        |\psi_3\rangle=\frac{1}{d^{n}}\sum_{x,y\in\mathbb{Z}_d^n}\omega^{\,x\cdot y}\,|y\rangle\,|f(x)\rangle,
        \qquad \omega=e^{2\pi i/d}.
    \]
    \item \textbf{Exploit the promise.} Group by orbits with representatives \(z\in S_1\) and write \(x=z\oplus k s\):
    \begin{align}
        |\psi_3\rangle
        &= \frac{1}{d^{n}} \sum_{y} |y\rangle
           \left[\sum_{z\in S_1} \omega^{\,z\cdot y}
                 \left(\sum_{k=0}^{d-1} (\omega^{\,s\cdot y})^k \right)
                 |f(z)\rangle \right].
        \label{eq:psi3_expanded}
    \end{align}
    The inner geometric sum equals \(d\) if \(s\cdot y\equiv 0\ (\bmod\, d)\) and \(0\) otherwise; thus only \(y\in S^\perp:=\{y\in\mathbb{Z}_d^n:\, y\cdot s\equiv 0\ (\bmod\, d)\}\) survive, giving
    \[
        |\psi_3\rangle=\frac{1}{d^{n-1}} \sum_{y\in S^\perp} |y\rangle
        \left[\sum_{z\in S_1} \omega^{\,z\cdot y}|f(z)\rangle\right].
    \]
    \item \textbf{Measurement.} Measuring the first register yields \(y\in S^\perp\) and a linear constraint \(y\cdot s\equiv 0\ (\bmod\, d)\). Repeating yields \(n-1\) independent equations in expected \(\Theta(n)\) samples.
\end{enumerate}

\paragraph{Remark.}
Setting \(d=2\) recovers Simon’s original qubit algorithm.

\paragraph{Probability distribution over \(S^{\perp}\).}
For any fixed \(y'\in S^\perp\),
\[
    P(y') = |\langle y'|\psi_3\rangle|^2
           = \left(\frac{1}{d^{n-1}}\right)^2
              \sum_{z',z\in S_1}(\omega^{\,z'\cdot y'})^* \omega^{\,z\cdot y'} \,\delta_{f(z),f(z')}
           = \frac{1}{d^{n-1}},
\]
confirming uniformity over \(S^\perp\).


\section{Complexity and Dimension Dependence}
\label{sec:effect_of_d}

Let \(|S^\perp|=d^{\,n-1}\). After \(m\) independent samples, the span contains at most \(d^{\,m}\) elements; hence the probability that the next sample is independent satisfies
\[
    \Pr[\text{independent at step }m+1] \;\ge\; 1-\frac{d^{\,m}}{d^{\,n-1}}.
\]
Therefore
\[
    p \;\ge\; \prod_{i=0}^{n-2} \left(1-\frac{d^{\,i}}{d^{\,n-1}}\right),
\]
which is exact when \(d\) is prime (vector space case), and a valid lower bound otherwise. Consequently, for one full run
\[
    P_{\mathrm{fail}}^{(1)} \;\le\; 1-p \;\le\; \frac{d+1}{d^{2}} - \frac{1}{d^{\,n}},
\]
and after \(k\) independent runs,
\[
    P_{\mathrm{fail}}^{(k)} \;\le\; \left(\frac{d+1}{d^{2}} - \frac{1}{d^{\,n}}\right)^{k}.
\]
To ensure \(P_{\mathrm{fail}}^{(k)} \le \epsilon\),
\[
    k \;\ge\; \left\lceil \log_{\left(\frac{d+1}{d^{2}} - \frac{1}{d^{\,n}}\right)} \!\!\left(\epsilon\right) \right\rceil
    \;\approx\; \left\lceil \frac{-\log \epsilon}{\,2\log d - \log(d+1)\,} \right\rceil
    \quad (\text{large }n).
\]

\begin{figure}[h]
    \centering
    \includegraphics[width=0.8\columnwidth]{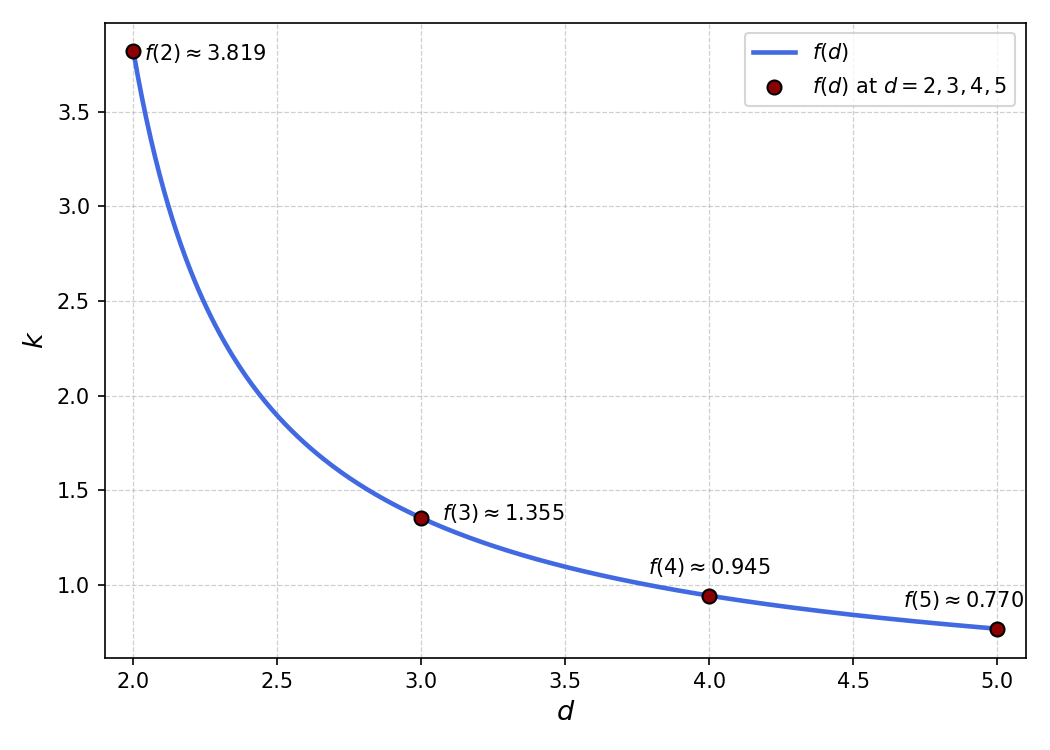}
    \caption{Asymptotic repetitions \(k\) required for \(P_{\mathrm{fail}}\le 1/3\) as a function of \(d\):
    \(f(d)=\frac{\log 3}{2\log d - \log(d+1)}\).}
    \label{fig:k_vs_d}
\end{figure}

With target failure $\epsilon\in(0,1)$ and large $n$,
the base decreases with $d$ for $d\ge 2$, so $k$ is monotonically decreasing in $d$. A single shot condition $k=1$ requires, asymptotically,
\[
    \epsilon \geq \frac{d+1}{d^2}.
\]
Illustrative computations for $\epsilon\in\{10^{-1},10^{-2},10^{-3}\}$ 
reveal how the admissible region of $(d,n)$ values shrinks as the error tolerance tightens. 
Figure \ref{fig:surface_epsilons} summarizes the behavior of 
$f(d,n,\epsilon)=\log(\epsilon)/\log\!\big((d+1)/d^2-d^{-n}\big)$ 
across representative $\epsilon$ levels.

\begin{figure}[h!]
    \centering
    \begin{subfigure}[b]{0.32\textwidth}
        \includegraphics[width=\textwidth]{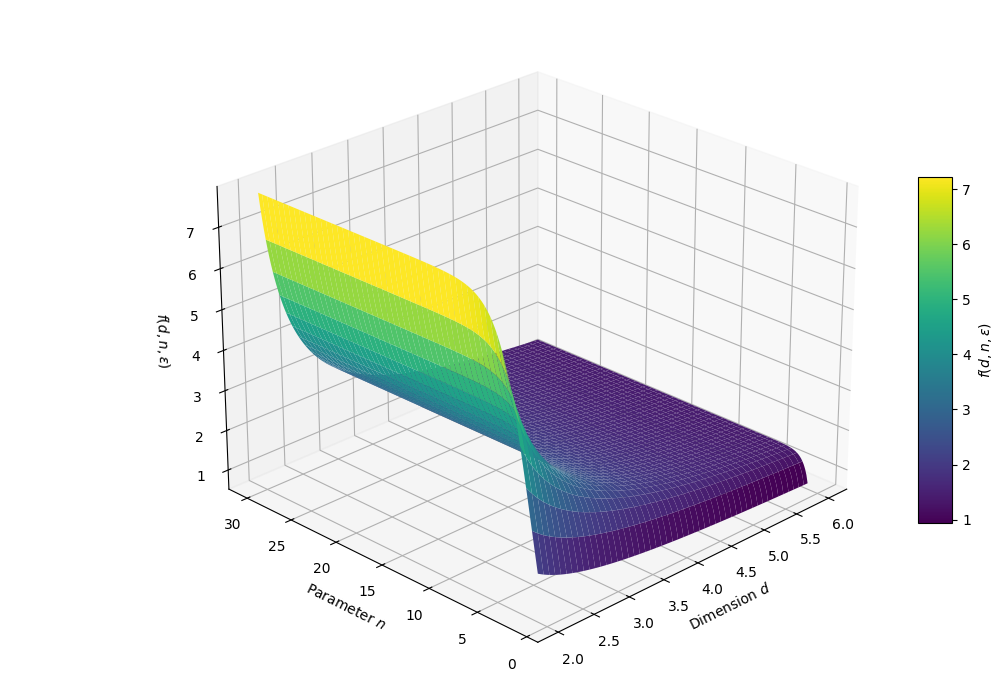}
        \caption{$\epsilon=10^{-1}$}
    \end{subfigure}\hfill
    \begin{subfigure}[b]{0.32\textwidth}
        \includegraphics[width=\textwidth]{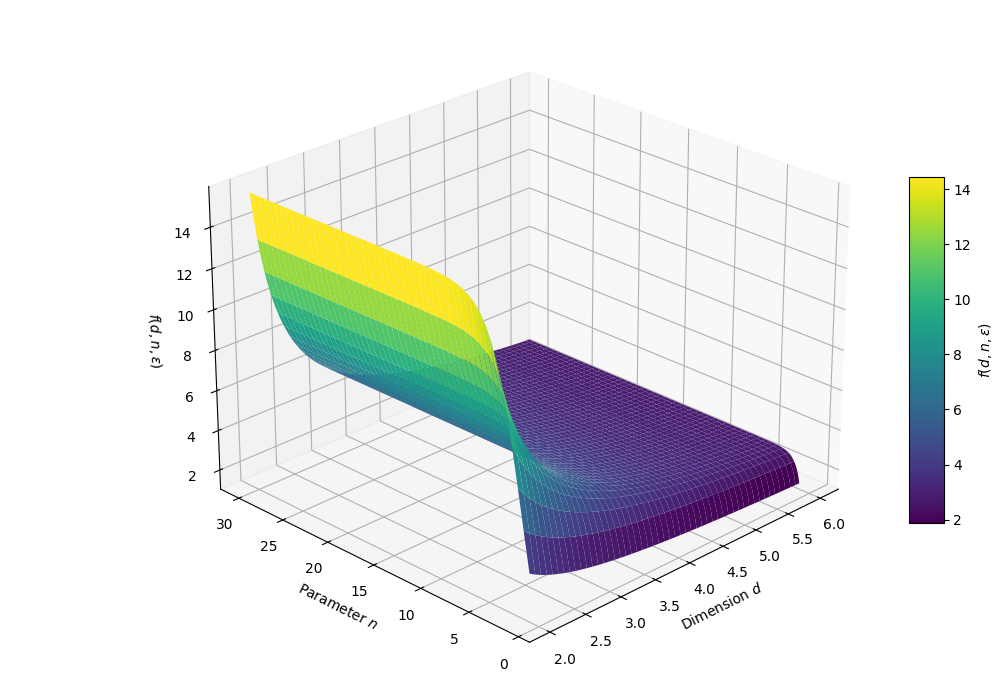}
        \caption{$\epsilon=10^{-2}$}
    \end{subfigure}\hfill
    \begin{subfigure}[b]{0.32\textwidth}
        \includegraphics[width=\textwidth]{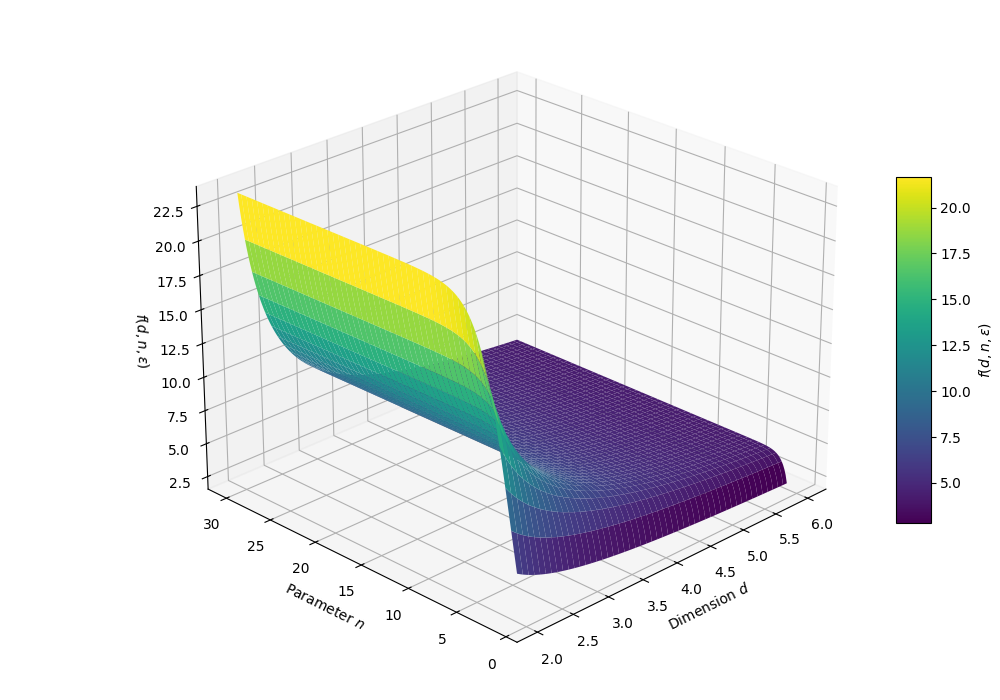}
        \caption{$\epsilon=10^{-3}$}
    \end{subfigure}
    \caption{Effect of decreasing $\epsilon$ on the surface $f(d,n,\epsilon)$. Lower values of $\epsilon$ increase the required repetition count, while larger dimensions $d$ reduce it, shifting the feasible region toward smaller $k$ as $d$ grows.}
    \label{fig:surface_epsilons}
\end{figure}

\paragraph{Illustrative thresholds.}
Table~\ref{tab:dimension_repetitions} summarizes the tradeoff between 
dimension and repetitions for several target failure probabilities. 
As $\epsilon$ decreases, the required number of repetitions for qubits 
grows only logarithmically, while the single shot threshold dimension 
$d_{\text{single-shot}}$ increases by roughly one order of magnitude 
for each additional digit of precision.

\begin{table}[h]
    \centering
    \caption{Approximate single shot thresholds and repetition counts
    for Simon's algorithm at different error tolerances $\epsilon$.
    The column $d_{\text{single-shot}}$ gives the smallest dimension
    (rounded) for which the asymptotic bound yields $k(d) \approx 1$,
    while $k(d=2)$ is the corresponding number of repetitions for qubits.}
    \label{tab:dimension_repetitions}
    \begin{tabular}{@{}ccc@{}}
        \toprule
        Target failure $\epsilon$ 
        & $d_{\text{single-shot}}$ (i.e., $k(d)\approx 1$) 
        & Repetitions $k(d=2)$ \\
        \midrule
        $10^{-1}$ & $\approx 11$   & $\approx 8$  \\
        $10^{-2}$ & $\approx 101$  & $\approx 16$ \\
        $10^{-3}$ & $\approx 1001$ & $\approx 25$ \\
        \bottomrule
    \end{tabular}
\end{table}


\subsection{Dimensional Lifting via Function Multiplicity Expansion}
\label{sec:dim_lifting}

Beyond the hardware’s native dimension $d$, one may \emph{embed} a $d$-to-one promise function into a higher dimensional $(d'\! =\! \ell d)$-to-one instance, thereby tightening the single shot failure bound 
$P_{\mathrm{fail}}\le (d+1)/d^2$.  
The resulting bound satisfies
\[
  \frac{d'+1}{(d')^2} = \frac{\ell d + 1}{(\ell d)^2}.
\]
Selecting the smallest integer $\ell$ such that
\[
  \frac{\ell d + 1}{(\ell d)^2} \leq \epsilon
\]
guarantees one-shot success probability exceeding $1-\epsilon$.  
Rearranging yields the quadratic condition
\[
  \epsilon d^2 \ell^2 - d\ell - 1 \geq 0,
  \qquad
  \Rightarrow\qquad
  \ell \geq \frac{1 + \sqrt{1 + 4\epsilon}}{2\epsilon d}.
\]

This expression quantifies how much the effective dimension must be increased to achieve a given error tolerance.  
For instance, with $d=6$ and $\epsilon=10^{-2}$, the bound gives 
$\ell > 16.83$, so $\ell=17$ suffices, corresponding to $d' = 102$ and 
$P_{\mathrm{fail}}\approx 103/102^2 \approx 9.9\times10^{-3}$.


\section{Simulation of the Qudit Simon Algorithm}
\label{sec:simulation}

\noindent
We validate the measurement statistics predicted by the qudit variant of Simon’s algorithm by simulating the full unitary circuit in QuTiP.
Given local dimension $d$, register size $n$, and hidden shift $s\in\mathbb{Z}_d^n$, the simulation performs:
(i) state preparation on two $n$ qudit registers,
(ii) a generalized Hadamard (QFT$_d$) on the first register,
(iii) a black box call to $U_f$ for a $d$-to-one promise function with orbits $\{x+k s\}_{k\in\mathbb{Z}_d}$,
(iv) projective measurement of the second register to a value in $\mathrm{Im}(f)$, and
(v) a second QFT$_d$ on the first register followed by measurement.

\bigskip
\noindent\textbf{Minimal pseudocode.}
\vspace{-0.5em}
\begin{lstlisting}[style=pseudocode,
caption={Simulation pipeline (pseudocode).}, label={lst:sim_pseudo}]
• Inputs:
•   d : local dimension
•   n : number of qudits per register
•   s : hidden shift in $\mathbb{Z}_d^n$
•   M : number of trials

• $QFT_d$ = $d\times d$ QFT matrix
• $QFT_{\text{first}} = QFT_d^{\otimes n} \otimes I^{\otimes n}$ (QFT on first register only)
• $U_f$ : $|x\rangle|0\rangle \mapsto |x\rangle|f(x)\rangle$  (d-to-one structure)

• $|\psi_0\rangle = |0^{\otimes 2n}\rangle$
• $|\psi_1\rangle = QFT_{\text{first}} \, |\psi_0\rangle$
• $|\psi_2\rangle = U_f \, |\psi_1\rangle$

• Measure second register $\to$ collapse to $f(x_0)$
• $|\psi_3\rangle = QFT_{\text{first}}$  (collapsed state)

• Repeat M times:
• sample $y$ from measuring the first register

• Return:
•   samples $\{y\}$
•   support fraction on $S^\perp$ where $\langle y, s\rangle \equiv 0 \pmod d$
\end{lstlisting}

\subsection{Example: Simon's Algorithm with $d=4, n=4$}
\label{sec:example_d4}

Consider the hidden shift $s = [2,0,3,1] \in \mathbb{Z}_4^4$.  
Executing the generalized Simon algorithm repeatedly produces, with high probability, three independent measurement outcomes $y \in S^{\perp}$, sufficient to reconstruct $s$ via linear solving over $\mathbb{Z}_4$. The resulting empirical distribution is observed to be uniform over the orthogonal subspace of size $|S^{\perp}| = 4^3 = 64$.

\begin{figure}[h]
    \centering
    \includegraphics[width=0.9\columnwidth]{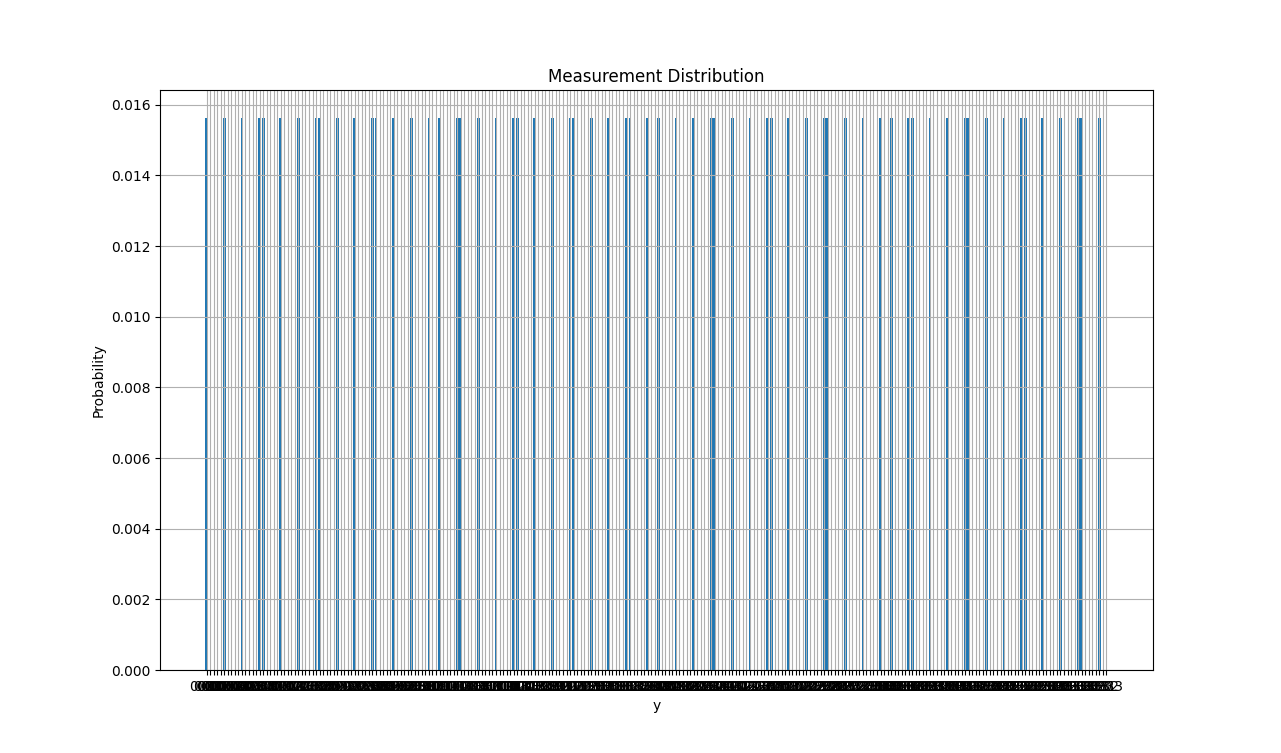}
    \caption{Empirical measurement distribution for $d=4$, $n=4$: samples are uniformly distributed over the orthogonal subspace $S^{\perp}$ containing $4^3 = 64$ outcomes.}
    \label{fig:simon_d4}
\end{figure}

\section{Lifting Simon’s Algorithm from Qubits to Qudits}
\label{sec:from_qubits_to_qudits}

Quantum algorithms formulated for qubits can often be generalized to higher dimensional systems \emph{qudits} to exploit their richer state space and potential for increased computational efficiency. A qudit of dimension \( d = 2^{\ell} \) can encode the information of \( \ell \) qubits, effectively compressing multiple binary subsystems into a single multilevel quantum unit. This mapping allows an exponential expansion of the accessible Hilbert space while maintaining only a linear increase in the number of physical qubits.

\subsection{Conceptual Framework}

The idea is to reinterpret groups of \(\ell\) qubits as one qudit, thus replacing each binary subsystem with a \(d\)-level subsystem. The same oracle \(U_f\) from the qubit version of Simon’s problem can then be applied on these qudit registers without modification, as the logical structure of the function \(f\) and its hidden symmetry remain consistent under modular arithmetic.

Formally, the standard initialization state of the Simon algorithm is:
\[
\ket{0}^{\otimes n} \ket{0}^{\otimes n}_{\text{ancilla}}.
\]
To lift this construction to qudits of dimension \( d = 2^{\ell} \), we append \(\ell - 1\) additional qubits per logical qubit, producing
\[
\ket{0_\ell}^{\otimes n} \ket{0_\ell}^{\otimes n}_{\text{ancilla}},
\]
where each \(\ket{0_\ell}\) denotes a logical zero state of one qudit composed of \(\ell\) physical qubits.

\subsection{Qubit\texorpdfstring{$\to$}{->}Qudit Lift Using Only the Binary Oracle}

\paragraph{Notation.}
Fix an integer $d=2^{\ell}$.
We write $\mathbb{Z}_2^n$ with bitwise XOR $\oplus$ and $\mathbb{Z}_d^n$ with addition modulo $d$, denoted $\boxplus$.
For $u\in\mathbb{Z}_d$, let its binary expansion be
$u=\sum_{t=0}^{\ell-1}2^t\,u^{(t)}$ with $u^{(t)}\in\{0,1\}$.
For $\eta=(\eta_0,\dots,\eta_{n-1})\in\mathbb{Z}_d^n$ define the \emph{binary layers}
\[
X^{(t)}(\eta)\;\coloneqq\; \big(\eta_0^{(t)},\eta_1^{(t)},\dots,\eta_{n-1}^{(t)}\big)\in\mathbb{Z}_2^n
\quad (t=0,\dots,\ell-1),
\]
and the \emph{packing} map $\mathrm{pack}_\ell:\big(\mathbb{Z}_2^n\big)^{\ell}\to\mathbb{Z}_d^n$ by
\[
\mathrm{pack}_\ell\big(Y^{(0)},\dots,Y^{(\ell-1)}\big)
\;\coloneqq\;
\sum_{t=0}^{\ell-1} 2^t\,Y^{(t)}\;\;(\text{componentwise in }\mathbb{Z}_d).
\]
Given $s\in\mathbb{Z}_2^n\setminus\{0\}$, set its \emph{layer replicated lift}
\[
s_{(\ell)} \;\coloneqq\;
\big( \underbrace{s}_{t=0},\underbrace{s}_{t=1},\dots,\underbrace{s}_{t=\ell-1} \big)\in\big(\mathbb{Z}_2^n\big)^{\ell}.
\]
We will use the shorthand
\[
\eta \;\oplus_{\text{layers}}\; (\delta_0,\dots,\delta_{\ell-1})\!\cdot\! s
\;\coloneqq\;
\mathrm{pack}_\ell\!\Big(X^{(0)}(\eta)\oplus \delta_0 s,\;\dots,\;X^{(\ell-1)}(\eta)\oplus \delta_{\ell-1} s\Big),
\]
for any $(\delta_0,\dots,\delta_{\ell-1})\in\{0,1\}^{\ell}$.

\paragraph{Lifted oracle (constructed from $U_f$ only).}
Let $U_f$ implement a $2$-to-$1$ promise function
$f:\mathbb{Z}_2^n\to\mathbb{Z}_2^n$ with $f(x)=f(x\oplus s)$ for some nonzero $s\in\mathbb{Z}_2^n$.
Define $f_{(d)}:\mathbb{Z}_d^n\to\mathbb{Z}_d^n$ by
\begin{equation}
\label{eq:lifted-oracle}
f_{(d)}(\eta)
\;\coloneqq\;
\mathrm{pack}_\ell\!\big(
f(X^{(0)}(\eta)),\;f(X^{(1)}(\eta)),\;\dots,\;f(X^{(\ell-1)}(\eta))
\big).
\end{equation}
Operationally: \emph{unpack} $\eta$ into its $\ell$ binary layers, apply the original $U_f$ to each layer, then \emph{pack} the $\ell$ binary outputs back into one $d$-ary vector.

\begin{lemma}[Oracle invariance for the lifted construction]
\label{lem:layer-invariance}
For all $\eta\in\mathbb{Z}_d^n$ and all $(\delta_0,\dots,\delta_{\ell-1})\in\{0,1\}^{\ell}$,
\[
f_{(d)}(\eta)
\;=\;
f_{(d)}\!\big(\eta \;\oplus_{\mathrm{layers}}\; (\delta_0,\dots,\delta_{\ell-1})\!\cdot\! s\big).
\]
Consequently, $f_{(d)}$ is $d$-to-$1$, with each fiber equal to the orbit
$\{\eta \;\oplus_{\mathrm{layers}}\; \delta\!\cdot\! s:\; \delta\in\{0,1\}^{\ell}\}$.
\end{lemma}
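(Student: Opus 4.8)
The plan is to reduce everything to one elementary fact: $\mathrm{pack}_\ell$ and the tuple of layer maps $\big(X^{(0)},\dots,X^{(\ell-1)}\big)$ are mutually inverse bijections between $\big(\mathbb{Z}_2^n\big)^{\ell}$ and $\mathbb{Z}_d^n$ for $d=2^\ell$. Concretely, because each layer $Y^{(t)}$ has $0/1$ entries, the componentwise sums $\sum_{t=0}^{\ell-1}2^t Y^{(t)}_i$ are exactly base-$2$ expansions landing in $\{0,\dots,d-1\}$, so unpacking recovers the layers:
\[
X^{(t)}\big(\mathrm{pack}_\ell(Y^{(0)},\dots,Y^{(\ell-1)})\big)=Y^{(t)}\qquad(t=0,\dots,\ell-1).
\]
I would record this section identity first; the rest of the argument is then purely formal.

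\textbf{The displayed invariance.} Applying the section identity to the definition of $\oplus_{\mathrm{layers}}$ gives $X^{(t)}\big(\eta\oplus_{\mathrm{layers}}(\delta_0,\dots,\delta_{\ell-1})\cdot s\big)=X^{(t)}(\eta)\oplus\delta_t s$ for each $t$. Substituting into \eqref{eq:lifted-oracle},
\[
f_{(d)}\big(\eta\oplus_{\mathrm{layers}}\delta\cdot s\big)=\mathrm{pack}_\ell\big(f(X^{(0)}(\eta)\oplus\delta_0 s),\dots,f(X^{(\ell-1)}(\eta)\oplus\delta_{\ell-1}s)\big).
\]
Now examine each argument: if $\delta_t=0$ then $f(X^{(t)}(\eta)\oplus 0)=f(X^{(t)}(\eta))$ trivially, and if $\delta_t=1$ then $f(X^{(t)}(\eta)\oplus s)=f(X^{(t)}(\eta))$ by the binary Simon promise on $f$; either way the $t$-th argument is $f(X^{(t)}(\eta))$. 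Hence the right-hand side collapses to $\mathrm{pack}_\ell\big(f(X^{(0)}(\eta)),\dots,f(X^{(\ell-1)}(\eta))\big)=f_{(d)}(\eta)$, which is the claimed identity.

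\textbf{Fiber structure.} For the $d$-to-$1$ statement, first note the $2^\ell$ points $\eta\oplus_{\mathrm{layers}}\delta\cdot s$ are distinct: equality of two such points forces $\delta_t s=\delta_t' s$ in $\mathbb{Z}_2^n$ layer by layer, hence $\delta_t=\delta_t'$ since $s\neq 0$. So the orbit has exactly $d=2^\ell$ elements, and by the invariance just proved it lies inside $f_{(d)}^{-1}\big(f_{(d)}(\eta)\big)$. For the reverse inclusion, suppose $f_{(d)}(\eta')=f_{(d)}(\eta)$; injectivity of $\mathrm{pack}_\ell$ lets us read off $f(X^{(t)}(\eta'))=f(X^{(t)}(\eta))$ for every $t$, and the $2$-to-$1$ promise on $f$ then gives $X^{(t)}(\eta')=X^{(t)}(\eta)\oplus\delta_t s$ with $\delta_t\in\{0,1\}$; repacking via the section identity yields $\eta'=\eta\oplus_{\mathrm{layers}}\delta\cdot s$. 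Thus every fiber is precisely an orbit of size $d$, so $f_{(d)}$ is $d$-to-$1$.

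The only point demanding genuine care — rather than genuine difficulty — is the section identity $X^{(t)}\circ\mathrm{pack}_\ell=\pi_t$: one must verify that with $0/1$ layers there is no carry or wraparound modulo $d$ in the componentwise sums, so that ``unpack'' truly inverts ``pack.'' Once that is nailed down, the binary Simon promise enters exactly once per layer and the lemma follows without further computation.
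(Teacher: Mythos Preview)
Your argument is correct and follows the same layer-wise strategy as the paper: apply the binary Simon promise $f(x)=f(x\oplus s)$ to each of the $\ell$ layers separately and repack. Your treatment of the $d$-to-$1$ claim is in fact more complete than the paper's, which simply notes $|\{0,1\}^\ell|=d$ without verifying distinctness of the orbit points or the reverse fiber inclusion; your explicit use of the pack/unpack bijection and of the $2$-to-$1$ property of $f$ to force $X^{(t)}(\eta')=X^{(t)}(\eta)\oplus\delta_t s$ closes that gap.
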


\begin{proof}
By Simon's promise, $f(x)=f(x\oplus s)$ for all $x\in\mathbb{Z}_2^n$.
Fix any $\eta$ and any $\delta\in\{0,1\}^{\ell}$.
For each layer $t$ we have
$f\!\big(X^{(t)}(\eta)\oplus \delta_t s\big)=f\!\big(X^{(t)}(\eta)\big)$.
Packing the $\ell$ equalities with \eqref{eq:lifted-oracle} yields the claim.
The last statement follows because $\{0,1\}^{\ell}$ has size $2^{\ell}=d$.
\end{proof}

After applying $\mathrm{QFT}_{2^\ell}^{\otimes n}$ and measuring, convert the outcome to its modulo-$d$ form in $\mathbb{Z}_{2^\ell}^n$ by reading each block of $\ell$ wires as a single base-$d$ digit.

\begin{figure}[t]
\centering
\begin{tikzpicture}[x=0.9cm, y=0.5cm, thick]

  \draw[qpurple, very thick] (0,  0) node[left] {$\ket{\psi}^{(0)}_{0}$}
        -- (0.5,  0) -- (2.1, 0) -- (2.5,0);
  \draw[qpurple, very thick] (3.5,0) -- (3.6,0) -- (6,0);

  \draw[qolive, very thick] (0, -1) node[left] {$\ket{\psi}^{(1)}_{0}$}
        -- (0.5, -1) -- (2.1,-3) -- (2.5,-3);
  \draw[qolive, very thick] (3.5,-3) -- (3.9,-3) -- (5.5,-1) -- (6,-1);

\node at (-0.5,-2) {$\vdots$};
\node at (6,-2) {$\vdots$};

  \draw[qbrick, very thick] (0, -3) node[left] {$\ket{\psi}^{(\ell-1)}_{0}$}
        -- (0.5, -3) -- (2.1,-8) -- (2.5,-8);
  \draw[qbrick, very thick] (6,-3) -- (5.5,-3) -- (3.9,-8) -- (3.5,-8);

  \draw[qpurple, very thick] (0, -4) node[left] {$\ket{\psi}^{(0)}_{1}$}
        -- (0.5, -4) -- (2.1,-2/3) -- (2.5,-2/3); 
  \draw[qpurple, very thick] (6,-4) -- (5.5,-4) -- (3.9,-2/3) -- (3.5,-2/3);

  \draw[qolive, very thick] (0, -5) node[left] {$\ket{\psi}^{(1)}_{1}$}
        -- (0.5, -5) -- (2.1,-11/3) -- (2.5,-11/3);
  \draw[qolive, very thick] (6,-5) -- (5.5,-5) -- (3.9,-11/3) -- (3.5,-11/3);

\node at (-0.5,-6) {$\vdots$};
\node at (6,-6) {$\vdots$};

  \draw[qbrick, very thick] (0, -7) node[left] {$\ket{\psi}^{(\ell-1)}_{1}$}
        -- (0.5, -7) -- (2.1,-26/3) -- (2.5,-26/3);
  \draw[qbrick, very thick] (6,-7) -- (5.5,-7) -- (3.9,-26/3) -- (3.5,-26/3);

  \draw[qpurple, very thick] (0, -8) node[left] {$\ket{\psi}^{(0)}_{n}$}
        -- (0.5, -8) -- (2.1,-2) -- (2.5,-2); 
  \draw[qpurple, very thick] (6,-8) -- (5.5,-8) -- (3.9,-2) -- (3.5,-2);

  \draw[qolive, very thick] (0, -9) node[left] {$\ket{\psi}^{(1)}_{n}$}
        -- (0.5, -9) -- (2.1,-5) -- (2.5,-5);
  \draw[qolive, very thick] (6,-9) -- (5.5,-9) -- (3.9,-5) -- (3.5,-5);

\node at (-0.5,-10) {$\vdots$};
\node at (6,-10) {$\vdots$};

  \draw[qbrick, very thick] (0, -11) node[left] {$\ket{\psi}^{(\ell-1)}_{n}$}
        -- (0.5, -11) -- (2.1,-10) -- (2.5,-10); 
  \draw[qbrick, very thick] (6,-11) -- (5.5,-11) -- (3.9,-10) -- (3.5,-10);

\filldraw[
  draw=qpurple,
  fill=qpurple!20,
  line width=1pt
] (2.5, -2) rectangle (3.5, 0);
\node at (3, -1) {$U_f$};

\node at (2.4,-4/3) {$\vdots$};
\node at (3.6,-4/3) {$\vdots$};

\filldraw[
  draw=qolive,
  fill=qolive!20,
  line width=1pt
] (2.5, -5) rectangle (3.5, -3);
\node at (3, -4) {$U_f$};

\node at (2.4,-13/3) {$\vdots$};
\node at (3.6,-13/3) {$\vdots$};

\node at (3,-6.5) {$\vdots$};

\filldraw[
  draw=qbrick,
  fill=qbrick!20,
  line width=1pt
] (2.5, -10) rectangle (3.5, -8);
\node at (3, -9) {$U_f$};

\node at (2.4,-28/3) {$\vdots$};
\node at (3.6,-28/3) {$\vdots$};

\end{tikzpicture}

\caption{
Schematic representation of the lifted oracle construction.  
Each horizontal wire corresponds to a physical qubit $\ket{\psi}^{(k)}_{j}$, where  
$k \in \{0,\ldots,\ell-1\}$ indexes the layer within a logical qudit and  
$j \in \{0,\ldots,n-1\}$ indexes the logical qudit position.  
For each fixed $j$, the $\ell$ wires detour through a common oracle block $U_f$,  
indicating that all $\ell$ physical qubits jointly encode a single $d = 2^\ell$-dimensional qudit.}
\end{figure}
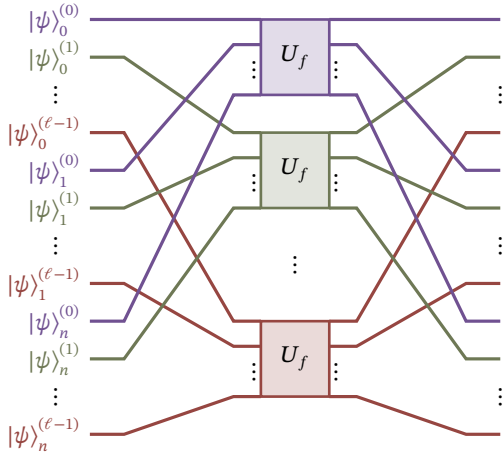

\subsection{Illustrative Construction of the Oracle and Algorithm over Qudits}
\label{sec:example_qudit}

This section gives a concrete worked example of the lifted oracle $f_{(d)}$ and the subsequent Fourier sampling when Simon’s algorithm is moved from qubits to qudits via the layerwise (pack/unpack) construction.

\subsubsection{Example: Oracle construction for $d=4$ (two layers)}
Let $d=4$ (so $\ell=2$) and take the binary hidden string
\[
s = 0101 \in \mathbb{Z}_2^{4}.
\]
Write any $\eta\in\mathbb{Z}_4^{4}$ in binary layers as
\[
X^{(0)}(\eta),\,X^{(1)}(\eta)\in\mathbb{Z}_2^{4},
\qquad
\eta=\mathrm{pack}_2\big(X^{(0)}(\eta),X^{(1)}(\eta)\big).
\]
Recall the lifted oracle (cf.\ Eq.~\eqref{eq:lifted-oracle})
\[
f_{(4)}(\eta)=
\mathrm{pack}_2\!\Big(
f\big(X^{(0)}(\eta)\big),\;
f\big(X^{(1)}(\eta)\big)
\Big),
\]
and the layerwise invariance (Lemma~\ref{lem:layer-invariance})
\[
f_{(4)}(\eta)
=
f_{(4)}\!\Big(
\mathrm{pack}_2\big(X^{(0)}(\eta)\oplus \delta_0 s,\;
                     X^{(1)}(\eta)\oplus \delta_1 s\big)
\Big),
\quad
(\delta_0,\delta_1)\in\{0,1\}^2.
\]
Thus every input $\eta$ has an orbit of size $4$ under the two independent layer toggles, and $f_{(4)}$ is constant on this orbit.

\paragraph{Concrete coordinate view.}
Let $\eta=(x_1,x_2,x_3,x_4)\in\mathbb{Z}_4^{4}$ and write each coordinate as
$x_j=2\,a_j+b_j$ with $a_j,b_j\in\{0,1\}$ (so $a_j=X^{(1)}_j(\eta)$ and $b_j=X^{(0)}_j(\eta)$).
Since $s=0101$, the indices with $s_j=1$ are $\{2,4\}$ (1-based).
Toggling layer $t$ adds $2^t$ to those coordinates modulo $4$.
Hence, for $(\delta_0,\delta_1)\in\{0,1\}^2$,
\[
\eta^{(\delta_0,\delta_1)}
\;=\;
\bigl(
x_1,\;\;x_2 \boxplus k,\;\;x_3,\;\;x_4 \boxplus k
\bigr),
\quad
k \;=\; \delta_0\cdot 1 \;+\; \delta_1\cdot 2 \;\in\;\{0,1,2,3\},
\]
where $\boxplus$ is addition modulo $4$ on $\mathbb{Z}_4$.
Equivalently, the four inputs in the fiber through $\eta$ can be listed succinctly as
\[
(x_1,\;x_2\boxplus k,\;x_3,\;x_4\boxplus k),
\qquad
k\in\{0,1,2,3\}.
\]
By Lemma~\ref{lem:layer-invariance},
\[
f_{(4)}(x_1,\,x_2,\,x_3,\,x_4)
=
f_{(4)}(x_1,\,x_2\boxplus k,\,x_3,\,x_4\boxplus k)
\quad\text{for all }k\in\{0,1,2,3\}.
\]
This clarifies (and replaces) the informal pattern ``$(x,k,y,k)$'': the same offset $k$ is added modulo $4$ exactly at the coordinates where $s_j=1$, while the others remain free.

Applying $\mathrm{QFT}_4^{\otimes 4}$ on each binary layer yields outcomes
\[
Y=\mathrm{pack}_2\big(Y^{(0)},Y^{(1)}\big),
\qquad
Y^{(0)},Y^{(1)}\in\mathbb{Z}_2^{4},
\]
with the standard Simon constraints
\[
\langle Y^{(t)}, s\rangle \equiv 0 \pmod 2,
\qquad t=0,1.
\]

\subsubsection{Example: Full Algorithmic Flow over Qudits}

Let us now explicitly trace the algorithmic steps for \(n = 2\) and hidden string \(s = 01\), promoted to \(d = 4\) (i.e., \(\ell = 2\)). Each logical qubit becomes a ququart.

The initial state is
\[
\ket{0}^{\otimes 2}\ket{0}^{\otimes 2}_{\text{ancilla}} = \ket{00,00}.
\]

\paragraph{Step 1: Apply the QFT.}
Applying $\mathrm{QFT}_4^{\otimes 4}$ to the ququart register yields outcomes:
\[
\frac{1}{4}
\Big(
\ket{0_4}+\ket{1_4}+\ket{2_4}+\ket{3_4}
\Big)
\otimes
\Big(
\ket{0_4}+\ket{1_4}+\ket{2_4}+\ket{3_4}
\Big)
\ket{0_4,0_4},
\]
or equivalently,
\[
\frac{1}{4}
\sum_{a,b=0}^{3} \ket{a,b}\ket{00}.
\]

\paragraph{Step 2: Apply the oracle.}
The oracle \(U_f\) acts as
\[
\ket{x}\ket{0} \longmapsto \ket{x}\ket{f(x)}.
\]
Thus, after \(U_f\),

\begin{align*}
\frac{1}{4}\!
\Big(
[\ket{00}+\ket{01}+\ket{02}+\ket{03}]\ket{f(00)} +\\
[\ket{10}+\ket{11}+\ket{12}+\ket{13}]\ket{f(10)} +\\
[\ket{20}+\ket{21}+\ket{22}+\ket{23}]\ket{f(20)} +\\
[\ket{30}+\ket{31}+\ket{32}+\ket{33}]\ket{f(30)}
\Big).
\end{align*}

\paragraph{Step 3: Measurement and collapse.}
Suppose the ancilla is measured in the state \(\ket{f(30)}\).  
The system collapses to
\[
\frac{1}{2}\big(\ket{30}+\ket{31}+\ket{32}+\ket{33}\big)\ket{f(30)},
\]
or, neglecting the ancilla,
\[
\frac{1}{2}\ket{3}\big(\ket{0}+\ket{1}+\ket{2}+\ket{3}\big).
\]
In the binary notation of two qubits per ququart,
\[
\frac{1}{2}\ket{11}\otimes\big(\ket{00}+\ket{01}+\ket{10}+\ket{11}\big).
\]

\paragraph{Step 4: Apply QFT\(_4\) again.}
For \(\omega = e^{i\pi/2} = i\),
\begin{align*}
\mathrm{QFT}_4\ket{00} &= \tfrac{1}{2}(\ket{00}+\ket{01}+\ket{10}+\ket{11}),\\
\mathrm{QFT}_4\ket{01} &= \tfrac{1}{2}(\ket{00}+i\ket{01}-\ket{10}-i\ket{11}),\\
\mathrm{QFT}_4\ket{10} &= \tfrac{1}{2}(\ket{00}-\ket{01}+\ket{10}-\ket{11}),\\
\mathrm{QFT}_4\ket{11} &= \tfrac{1}{2}(\ket{00}-i\ket{01}-\ket{10}+i\ket{11}).
\end{align*}

The equal superposition transforms as
\[
\mathrm{QFT}_4(\ket{00}+\ket{01}+\ket{10}+\ket{11}) = 2\ket{00}.
\]
Substituting, we find
\[
\frac{1}{2}
\big(\ket{00}-i\ket{01}-\ket{10}+i\ket{11}\big)\otimes\ket{00}.
\]

\paragraph{Step 5: Measurement outcomes.}
Measurement of the first register yields, with equal probability,
\[
\ket{00}\ket{00},\ \ket{01}\ket{00},\ \ket{10}\ket{00},\ \ket{11}\ket{00},
\]
which correspond respectively to
\[
\ket{00},\ \ket{10},\ \ket{20},\ \ket{30}.
\]

All these outcomes satisfy \(y \in S^\perp\), confirming that the measurement results are orthogonal to the hidden shift subspace, as required by Simon’s algorithm.

\section*{Discussion Summary}
\label{sec:summary}
We formulated Simon’s problem over the module $\mathbb{Z}_d^n$ and proved that, when the hidden shift $s$ has full order $\operatorname{ord}(s)=d$, a single oracle call followed by $\mathrm{QFT}_d^{\otimes n}$ produces measurement outcomes that are exactly uniform on
\[
S^\perp = \{\, y \in \mathbb{Z}_d^n : y^\top s \equiv 0 \pmod d \,\},
\]
recovering Simon’s original qubit algorithm at $d=2$.
From this uniformity we derived non-asymptotic bounds on the probability of acquiring $n-1$ independent constraints, together with explicit repetition budgets as functions of the local dimension $d$.  These bounds show that the expected number of oracle calls remains $\Theta(n)$, while the number of repetitions required to reach a target failure probability decreases with $d$.
On the implementation side, we introduced a qubit native construction of a $d$-to-one qudit oracle $f_{(d)}$ using only the original binary promise oracle $U_f$: groups of $\ell$ qubits are repacked into one effective qudit of dimension $d=2^\ell$ via a layerwise encoding. 
QuTiP simulations for $d \in \{2,3,4\}$ confirm that the sampled outcomes are confined to $S^\perp$, empirically uniform over that subspace, and that the observed repetition counts match the predicted dimension repetition tradeoff.
Taken together, these results show that qudit-style formulations of Simon’s problem and their advantages can be explored and benchmarked using standard qubit based devices and simulators, without requiring native multilevel hardware.

Future work may explore robustness under realistic noise and optimized circuit constructions for larger effective dimensions.


\printbibliography


\end{document}